\DeclareMathOperator*{\argmax}{arg\,max}
\newtheorem{theorem}{Theorem}
\title{Limits of Predictability in Top-N Recommendation
}
\author{
	En Xu$^{a}$, Zhiwen Yu$^{a,1}$, Ying Zhang$^{a}$, Bin Guo$^{a}$, Lina Yao$^{b}$
	\thanks{\textsuperscript{1}To whom correspondence should be addressed. E-mail: zhiwenyu@nwpu.edu.cn (Z. Y.)}\\
  $^a$ School of Computer Science, Northwestern Polytechnical University, Xi'an 710129, China\\
  $^b$ School of Computer Science and Engineering, University of New South Wales, Sydney 2052, Australia\\
}
\def\thanks#1{\protected@xdef\@thanks{\@thanks
        \protect\footnotetext{#1}}}
\begin{document}
\maketitle

\begin{abstract}
\textit{Top-N} recommendation aims to recommend each consumer a small set of $N$ items from a large collection of items, and its accuracy is one of the most common indexes to evaluate the performance of a recommendation system. While a large number of algorithms are proposed to push the \textit{Top-N} accuracy by learning the user preference from their history purchase data, a \textbf{predictability} question is naturally raised - whether there is an upper limit of such \textit{Top-N} accuracy. This work investigates such predictability by studying the degree of regularity from a specific set of user behavior data. Quantifying the predictability of \textit{Top-N} recommendations requires simultaneously quantifying the limits on the accuracy of the $N$ behaviors with the highest probability. This greatly increases the difficulty of the problem. To achieve this, we firstly excavate the associations among $N$ behaviors with the highest probability and describe the user behavior distribution based on the information theory. Then, we adopt the Fano inequality to scale and obtain the \textit{Top-N} predictability. Extensive experiments are conducted on the real-world data where significant improvements are observed compared to the state-of-the-art methods. We have not only completed the predictability calculation for $N$ targets but also obtained predictability that is much closer to the true value than existing methods.
We expect our results to assist these research areas where the quantitative requirement of \textit{Top-N} predictability is required.
\end{abstract}

\keywords{Predictability, \textit{Top-N} Recommendation, Recommender Systems, Information theory, Statistics and Probability}

\section{Introduction}\label{sec:introduction}
A recommender system is a kind of personalized information filtering technology used to recommend items in line with his interests to a specific user. Recommender systems have been successfully applied in many fields, including e-commerce, information retrieval, social networks, location services, a news feed, and other areas \cite{liu2010personalized}.  
There are many scenarios in the recommender system, such as online shopping, where the platform can recommend $N$ items to the user simultaneously, and the recommendation is successful as long as the clicking or buying behavior occurs. Therefore, \textit{Top-N} accuracy is more often used as an evaluation metric in recommender systems.
There are many outstanding algorithms to improve the accuracy of recommendation, including collaborative filtering \cite{sarwar2001item}, content-based recommendation algorithms \cite{pazzani2007content}, and deep learning-based recommendation algorithms \cite{shan2016deep}. Research on these algorithms still does not answer the limit of accuracy that can be achieved on the dataset. Therefore, we need to research the predictability of \textit{Top-N} recommendations. Predictability ($\Pi$) refers to the maximum accuracy that the optimal algorithm can achieve given the dataset. The measure of predictability gives us an idea of the degree to which a user's behavior is regular. Predictability also allows us to understand the extent to which the field is currently evolving.
However, all the existing work can only calculate the \textit{Top-1} predictability \cite{song2010limits}, and there is no theory to quantify the \textit{Top-N} predictability at present.
\begin{figure*}[!t]
	\centering
	\setlength{\abovecaptionskip}{0pt}
	\setlength{\belowcaptionskip}{0pt}
	\includegraphics[width=0.8\textwidth]{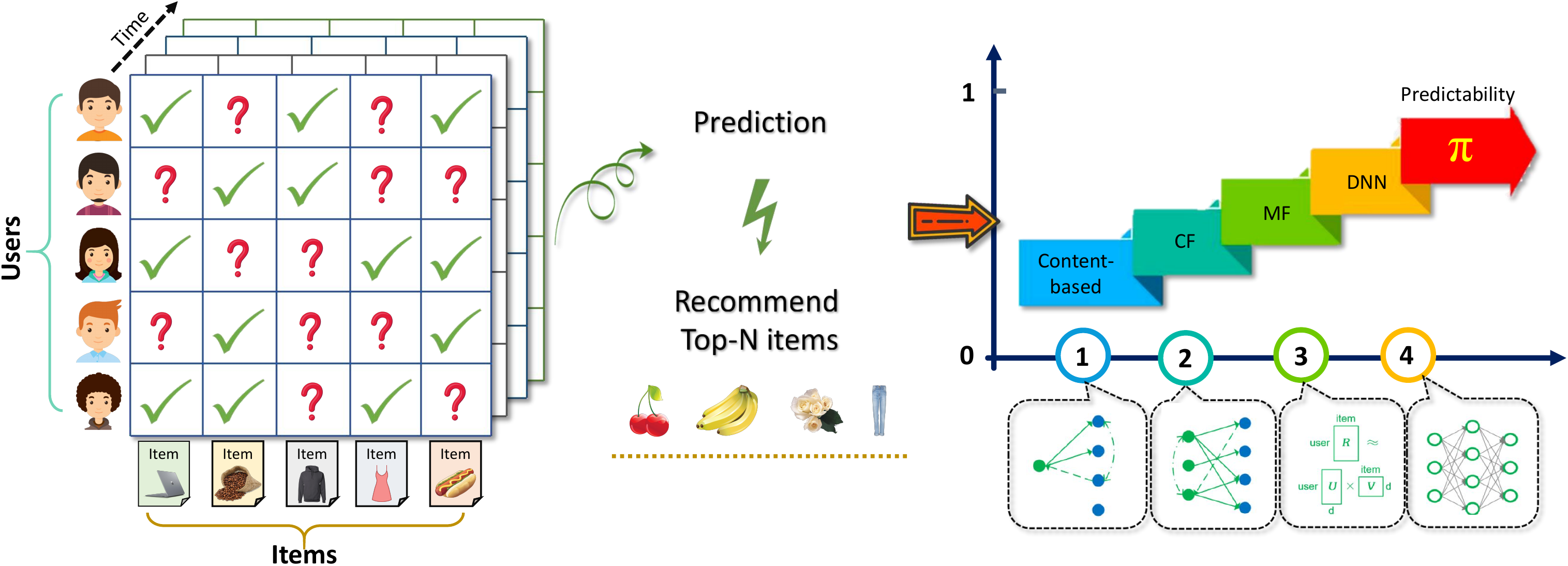} 
	\caption{Schematic representation of the predictability of \textit{Top-N} recommendation. As new algorithms continue to be proposed, the accuracy of \textit{Top-N} recommendation continues to improve. Predictability is the potential maximum of accuracy.}
	\label{fig_framework}
	\vspace{0pt}
\end{figure*}

Song et al. \cite{song2010limits} quantified the uncertainty of human movement data through information entropy and calculated that the limits of predictability of human movement behavior was 93\% by using inequality scaling. Zhou et al. \cite{L2015Toward} randomly extract a small number of links from the network. The influence on the eigenvector space of the network is small, which indicates that the network is regular. A structural consistency index is proposed to measure link predictability. Sun et al. \cite{sun2020revealing} found that the compression length of the sequence was correlated with the difficulty of prediction, thus exploring the predictability of network structure. Only Song's theory \cite{song2010limits} deduced the limits on predictability. Other works found approximate correlations between other vital indicators and predictability, which could not guarantee that the calculated predictability must be greater than the real predictability. However, the existing methods all solve the quantification problem of the \textit{Top-1} predictability and fail to obtain the \textit{Top-N} predictability. There are two things that we need to improve in our current work. Firstly we need to give a theory for the computation of \textit{Top-N} predictability. Secondly, our approach should derive the limits on predictability rather than just proposing new metrics that have an approximate relationship with predictability.

The existing theories to derive the limits of the \textit{Top-1} predictability adopt the method of information theory. The overall entropy $S_{whole}$ of user behavior can be calculated by certain calculation methods. By scaling with Fano inequality, the limits of the behavior entropy $S_{tail}$ can be calculated for all behavior except the behavior with the highest probability ($\Pi$). After calculating $S_{whole}$ and $S_{tail}$, according to the inequality among $S_{whole}$, $S_{tail}$ and $\Pi$, the limits of $\Pi$ is deduced. But if we want to solve for the predictability of the most likely $N$ behaviors simultaneously, that's not going to work.
There will be multiple unknown variables in a formula after scaling. 
To solve this problem, we use group behavior data to quantify the ratio of predictability among the $N$ items with the highest probability in \textit{Top-N} recommendation, obtain the correlation of $N$ unknown variables, and add new constraints.
Therefore, we can convert several unknown variables into $1$ to achieve the \textit{Top-N} predictability measurement work.
At the same time, our method has other important properties. The discrepancy between the predictability we derive and the actual predictability is minor. Finally, we also quantified the error of solving the predictability under different scaling scenarios, achieving a more accurate estimate of the predictability. In summary, this paper makes the following contributions:
\begin{itemize}
	\item To the best of our knowledge, this work is the first to formally define the predictability in \textit{Top-N} recommendation and give a solution. We can quantify the predictability of the $N$ behaviors with the highest probability, whereas existing works provide only the one with the highest probability.
	\item Our method can quickly calculate the highest accuracy under a specific dataset, which helps judge the difficulty of the problem and the improvement space of existing methods. It is simple compared to implementing specific algorithms to obtain accuracy.
	\item Our study not only achieves the quantification of \textit{Top-N} predictability but also dramatically improves the accuracy of quantification, which makes the research on predictability more valuable for application.
\end{itemize}

\section{Related Works}
\subsection{Top-N recommendation}
Due to the tremendous academic and commercial value of \textit{Top-N} recommendations, research on recommender systems has been in full swing in both academia and industry over the past decades. We briefly review representative work in this area. The content-based recommendation algorithms were heavily studied in the early days, and the core idea was to calculate the similarity between users and items by their attributes \cite{pazzani2007content}. Based on the similarity between users and items, the most relevant \textit{Top-N} items are finally recommended to the target users. The content-based recommendation algorithm only needs to calculate the similarity between the target users and the items individually based on specific rules. The method is not very computationally intensive. However, feature extraction is often difficult. Instead of collecting rich user attributes, collaborative filtering directly establishes associations based on user-item interaction records \cite{su2009survey}. The core idea of collaborative filtering is that similar users have similar behavioral preferences, and similar items will be interacted by similar users. Based on the former, user-based collaborative filtering is proposed, and based on the latter \cite{zheng2010collaborative}, item-based collaborative filtering is developed \cite{sarwar2001item}. The model-based collaborative filtering uses the interaction data of users and items as a whole to learn the global model to predict the missing interaction behaviors \cite{pennock2013collaborative}.

With the successful development of deep neural networks, many neural recommendation models have emerged in recent years. Deep networks have achieved significant performance improvements in handling large sets of user items and complex interactions between users and items due to their excellent representation learning capability and powerful fitting ability. Rating prediction algorithms based on various deep learning models are also emerging, such as restricted Boltzmann machines, deep belief networks, autoencoders, recurrent neural networks, convolutional neural networks, multilayer perceptrons, etc. \cite{salakhutdinov2007restricted,quadrana2017personalizing,man2017cross}. DNNs can model users' interests by directly inputting historical behaviors due to their end-to-end advantages \cite{covington2016deep}. RNNs can capture the evolution of users' interests in time-series behaviors \cite{quadrana2017personalizing}. GNNs can learn the representation of users and items by using the powerful performance of processing graph data \cite{fan2019graph}. There is not only the direct use of deep learning models to improve performance but also a lot of work to further develop typical recommendation algorithms with the help of deep learning. NeuMF is a classical collaborative filtering framework that uses multilayer perceptrons to model user-item interactions \cite{he2017neural} accurately. Wide\&Deep \cite{cheng2016wide} and DeepFM \cite{guo2017deepfm} both integrate feature learning and deep modeling and include both user interest breadth and depth modules. Clever algorithms are constantly proposed, and the accuracy of \textit{Top-N} recommendations is increasing, but this still does not answer our question, what is the maximum accuracy? Therefore we need further research on predictability.

\subsection{Predictability}
Predictability research aims to uncover the regularity of the actor's subject, which can reflect the inherent fundamental properties of the subject. At the same time, predictability allows us to know the field's current state of development and determine the feasibility of achieving the corresponding accuracy for specific problems. Early studies of predictability focused on turbulent systems, quantifying the predictability of the climate system by analyzing the dynamical equations established for the system to obtain the effect of initial condition uncertainty and boundary uncertainty on error growth. In \textit{Science 2010}, Song analyzed human mobility data from the perspective of information theory, quantified the chaos of behavior using entropy, and finally obtained the predictability of human mobility behavior up to 93\% using the scaling of Fano's inequality \cite{song2010limits,1961242}. Due to the generalizability of the theory, it has been widely applied to other scenarios such as human communication sequences \cite{zhao2013emergence}, vehicle mobility \cite{wang2015predictability}, IP address sequences for cyber attacks \cite{chen2015spatiotemporal}, stock price changes \cite{fiedor2014frequency}, electronic health records \cite{dahlem2015predictability}, and so on. Some work further digs into what factors significantly affect predictability through Song's theory, and finally obtains some important factors of predictability, such as Spatio-temporal resolution \cite{lin2012predictability}, exploration preferences \cite{cuttone2018understanding}, and data quality \cite{iovan2013moving}. Some works try different quantification methods of entropy to analyze the confusion of data from different aspects, such as mutual information \cite{chen2016temporal}, instantaneous entropy \cite{baumann2013use}, alignment entropy \cite{scarpino2019predictability}. Smith et al. find a more accurate set of candidate locations for the next human moment by topological constraints on geographic space and get more accurate predictability \cite{smith2014refined}. Sun et al. further compressed and coded the sequence by converting the graph into a sequence and obtained the correlation between the shortest compression length and the predictability of the graph \cite{sun2020revealing}. However, the existing methods all aim to explore the predictability of \textit{Top-1} and cannot be directly applied to the predictability of \textit{Top-N} recommendations.

\subsection{Predictability of recommender systems}
Only a tiny amount of work has been done to study the predictability of recommender systems. Related jobs can be roughly divided into three categories: direct applications of Song's theory to recommender systems; the limits on accuracy proposed for the shortcomings of specific methods; and mining regularity results on recommender datasets. Both Krumme \cite{krumme2013predictability}, and Jarv's \cite{jarv2019predictability} works directly use the theory of predictability of human movement behavior to derive the predictability of human consumption behavior in recommendation scenarios. Meanwhile, Jarv roughly measures the predictability of the recommender system by counting the number of first-time samples in the test set \cite{jarv2019predictability}. As for the diffusion-based method, Zhang et al. pointed out that if the connection could not be established through diffusion within a certain number of steps, it would be impossible to make further correct recommendations \cite{zhang2019predictability}. Based on this analysis, the accuracy limits of the method is further deduced. Alex \cite{krumme2013predictability} analyzed the regularity of customers' visiting patterns by recording users' electronic consumption behaviors and found that although consumers have different personal preferences, over time, each person's visiting patterns of merchants have a high regularity. Users are less predictable in the short run but more predictable in the long run. However, most of the existing studies focused on the predictability of \textit{Top-1}, and failed to obtain the predictability of multiple candidate targets, namely \textit{Top-N} recommendation.


\section{THE PROPOSED METHOD}
In this section, we first give the existing methods for computing the predictability of recommender systems. Then we introduce our proposed method for predictability in \textit{Top-N} recommendations, and finally, we demonstrate in detail the excellent properties of our approach.
\subsection{Predictability of recommender system}
In the recommender system, let $U=\{u_{1},u_{2},\cdot\cdot\cdot,u_{|U|}\}$ represent a group of users, $V=\{v_{1},v_{2},\cdot\cdot\cdot,v_{|V|}\}$ represent a group of items, and list $B_{u}=\{v_{1}^{u},\cdot\cdot\cdot,v_{t}^{u},\cdot\cdot\cdot,v_{n_{u}}^{u}\}$ represent the behavior sequence of user $u \in U$, where $v_{t}^{u}\in V$ is the item clicked or purchased by $u$ at time step $t$, and $n_{u}$ is the length of user's behavior sequence. 
Given a dataset ($\mathcal{D}$), the highest accuracy that any algorithm can achieve on the dataset ($\mathcal{D}$) is predictability ($\Pi$).
$\Pi^{max}$ represents the limits on predictability. Through scaling of Fano's inequality, Song et al. \cite{song2010limits} finally deduced that if a user has clicked or purchased $M$ different items and the entropy of the behavior sequence is $S$, then the limit of predictability of the user $\Pi^{max}$ can be obtained by Eq. \ref{PreScience}. 
\begin{align}
	\label{PreScience}
	S = -\Pi^{max}\log_2\Pi^{max} - (1-\Pi^{max})\log_2(1-\Pi^{max}) + (1-\Pi^{max})\log_2(M-1),			
\end{align}
where the expression of entropy $S$ and its calculation method as shown below
\begin{align}
	\label{RealEntropy}
	S_{real} = - \sum_{T'\in T} P(T')	\log_2[P(T')].			
\end{align}

The real entropy $S_{real}$ not only depends on the frequency of purchases but also on the order of purchasing items. From the formula, we can know that the uncertainty contained in the same behavior $v_{i}$ in the user sequence is different if the time of occurrence is different. Therefore, the real entropy quantifies the chaotic degree of temporal sequence behavior.
Finding all subsets of a given set has exponential complexity $(O(2^{n}))$. We used the Lempel-Ziv estimator to calculate the actual entropy. The Lempel-Ziv estimator can quickly converge to the actual entropy. For user behavior sequences, $S_{real}$ can be estimated in the following ways
\begin{align}
	\label{formula_5}
	S^{est} = \left( \frac{1}{n} \sum_{i}\Lambda_{i} \right)^{-1} \ln n,		
\end{align}
where $\Lambda_{i}$ represents the length of the shortest substring starting at position $i$, which has not previously appeared from places $1$ to $i-1$.

\subsection{Defects of the existing method}
The existing method can calculate the predictability of items with the highest probability, that is, \textit{Top-1} predictability. We can try to treat the \textit{Top-N} candidates as a whole to obtain \textit{Top-N} predictability. Unfortunately, that idea doesn't work.
There are two variables that are important to evaluate in Eq. \ref{PreScience}, one is the entropy $S$ of the sequence and the other is the size of the candidate items set $M$ at the next moment. Considering the \textit{Top-N} candidate items as a whole to calculate the \textit{Top-N} predictability is actually equivalent to changing $M$ to $M-N$ and substituting it into the calculation, which corresponds to Eq. \ref{TOPN-PreScience}.
\begin{align}
	\label{TOPN-PreScience}
	S = -\Pi^{max}\log_2\Pi^{max} - (1-\Pi^{max})\log_2(1-\Pi^{max}) + (1-\Pi^{max})\log_2(M-N).		
\end{align}

Figure \ref{fig_M} shows the effect of $M$ on predictability. It can be seen that different $M$'s have an effect on predictability (Figure \ref{fig_M}, left), but that a mere small change in $M$ has little effect on predictability (Figure \ref{fig_M}, right). The figure on the left shows the effect of a change in $M$ on predictability at different entropies, with a significant change in predictability when $M$ changes by an order of magnitude. The figure on the right shows that there is little change in predictability when $M$ changes from 100 to 110. Whereas in a recommender system where the set of candidates $M$ is much larger than 100, a small change in $M$ has a much smaller impact on predictability. Therefore the method of calculating predictability by changing $M$ to $M-N$ will fail. $\Pi(\textup{Top-1})	\approx \Pi(\textup{Top-2}) \approx \dots \approx \Pi(\textup{Top-N})$. The \textit{Top-1} to \textit{Top-N} predictability obtained based on the existing method is the same, which is obviously inconsistent with the actual situation.
\begin{figure}[t]
	\centering
	\subfloat{
		\includegraphics[width=0.35\textwidth]{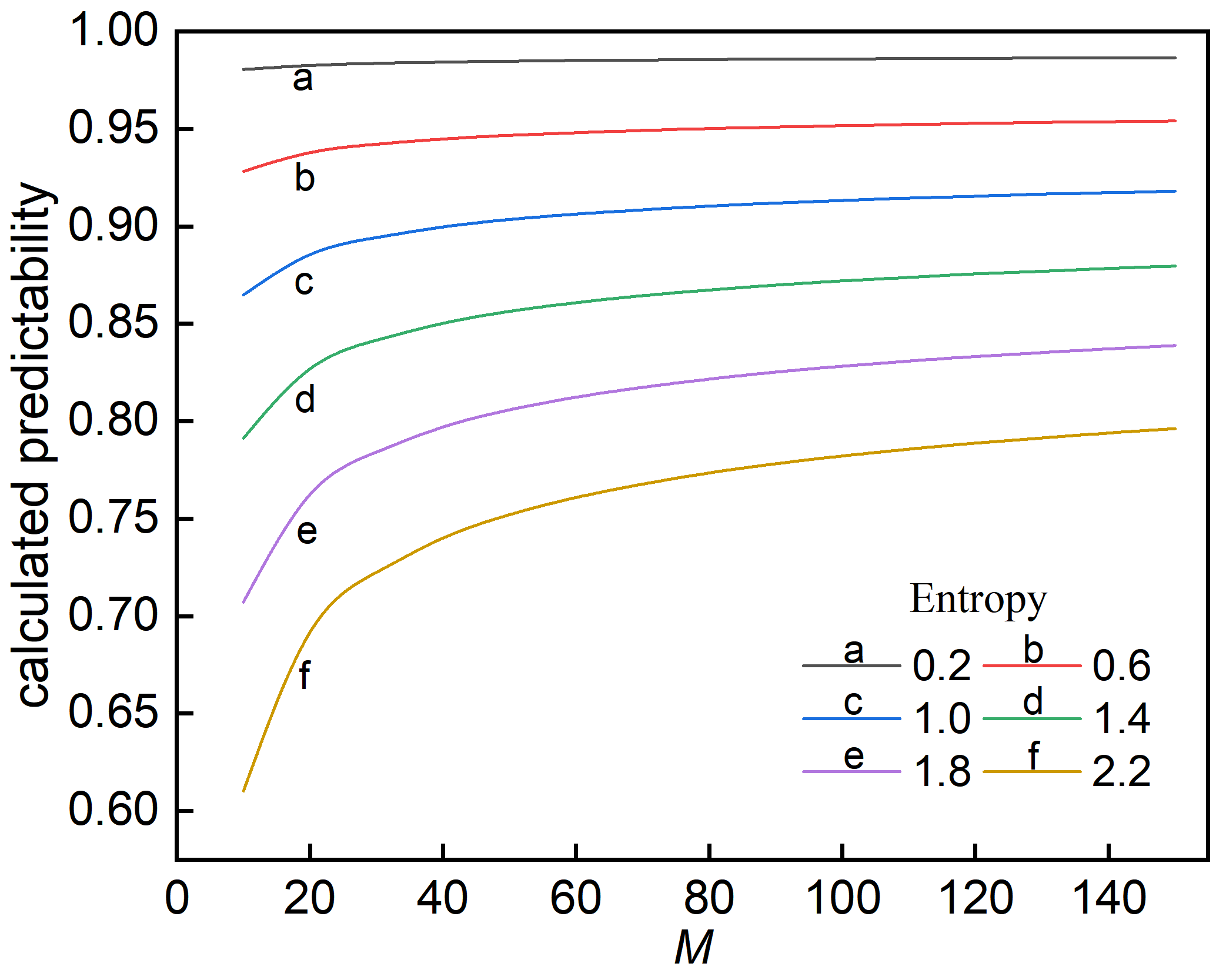}
	}
	\hspace{5mm}
	\subfloat{
		\includegraphics[width=0.361\textwidth]{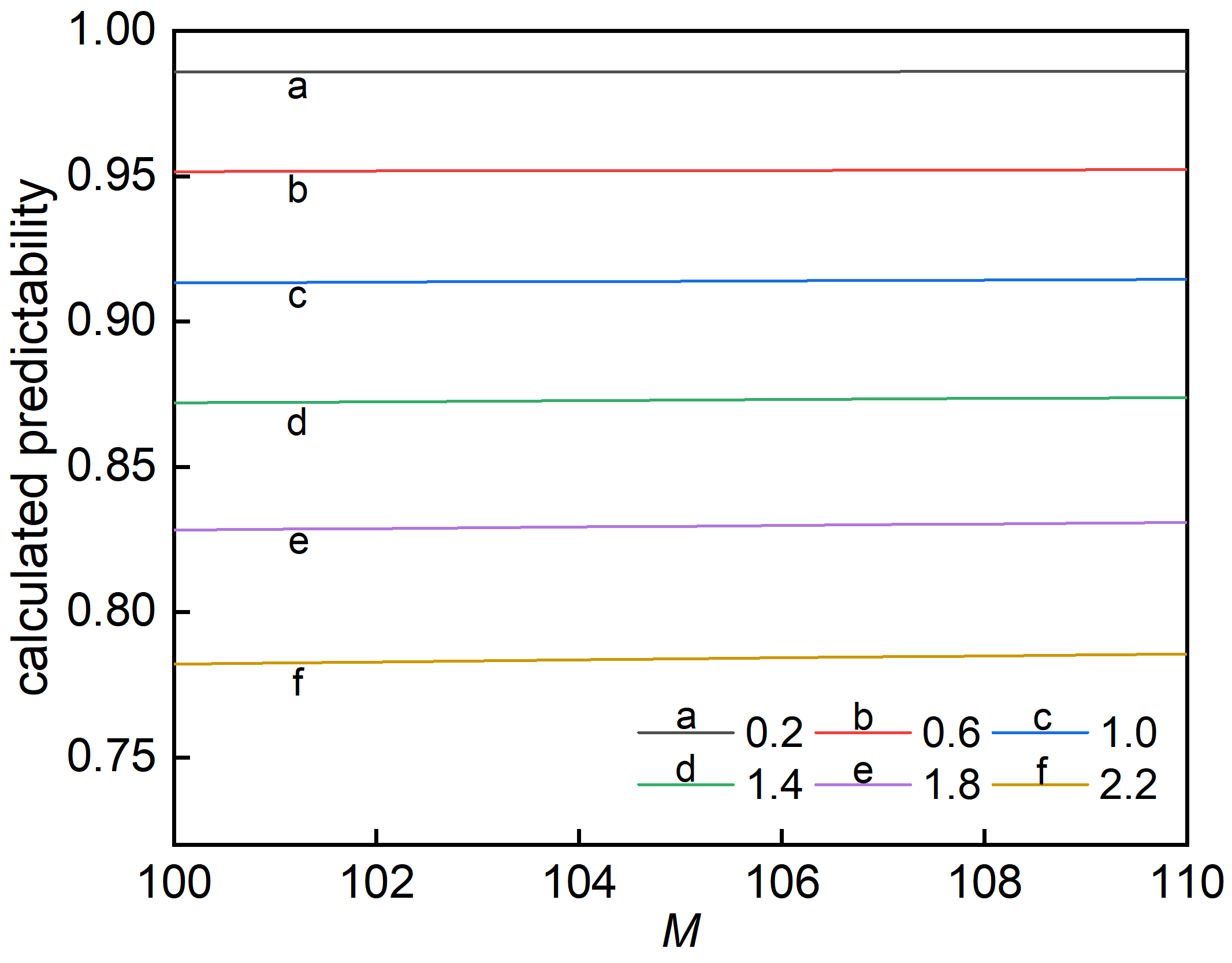}
	}
	\caption{The impact of M on predictability}
	\label{fig_M}
\end{figure}

\subsection{The limits on \textit{Top-N} predictability}
With the above predictability theory as a foundation, we next introduce the derivation of \textit{Top-N} predictability. Let's start by describing some of the symbols of predictability. We define $\pi^{u}_{1}(h_{n_{t}-1})$ as the probability of the most likely behavior of the user at the next moment given the historical behavior $h_{n_{t}-1}$. The subscript $1$ of $\pi^{u}_{1}(h_{n_{t}-1})$ indicates the maximum probability of the user's candidate behavior. Generally, subscript $i$ means the $i^{th}$ highest probability of the candidate behavior. Thus there are:
\begin{align}
	\pi^{u}_{1}(h_{n_{t}-1})=\sup_{x}\left\{Pr[X_{n_{t}}=x|h_{n_{t}-1}] \right\}
\end{align}

We define $\Pi^{u}_{i}(n_{t})$ as the predictability of user $u$ given a historical behavior of length $n_{t}-1$. The subscript $i$ indicates the predictability of the behavior with the $i^{th}$ highest probability. Let $P(h_{n_{t}-1})$ be the probability of observing a specific sequence of behaviors of length $n_{t}-1$. Then the predictability satisfies the following equation:
\begin{align}
	\Pi^{u}_{i}(n_{t}) = \sum_{h_{n_{t}-1}}P(h_{n_{t}-1})\pi^{u}_{i}(h_{n_{t}-1})
	\label{eq:2}
\end{align}

The total predictability of a user $\Pi^{u}_{i}$ can be obtained by summing the predictability of a user over the whole time as follows:
\begin{align}
	\Pi^{u}_{i} = \lim\limits_{n_{t}\to \infty}\frac{1}{n_{t}}\sum_{j=1}^{n_{t}}\Pi^{u}_{i}(j)  
\end{align}

By summing the predictability of all users under a dataset, the predictability $\Pi_{i}$ corresponding to the whole data set can be obtained.
\begin{align}
	\Pi_{i} = \lim_{n_{u},n_{t} \to \infty}\frac{1}{n_{u}}\sum_{j=1}^{n_{u}}\frac{1}{n_{t}}\sum_{k=1}^{n_{t}}\Pi^{j}_{i}(k)  
\end{align}

To derive the predictability of \textit{Top-N} recommendation, different from the previous method, which only reserved the user?s maximum probability behavior with all other behaviors scaled, we quantify all the actions of the user at the next moment.
We define $\{p_{1},p_{2},\cdot\cdot\cdot,p_{M}\}$ as the order of the probability of the user?s possible behaviors at the next moment from the largest to the smallest, and the corresponding behaviors is $\{v_{f_{1}},v_{f_{2}},\cdot\cdot\cdot,v_{f_{M}}\}$.
Therefore, we know that the entropy of user behavior sequence is:
\begin{align}
	S^{u} 	&= -\sum_{i=1}^{M}p_{i}\log_2p_{i} 	
\end{align}

Unfortunately, although the theoretical entropy of behavior can be listed, the above equation cannot be solved because only $S^{u}$ and the size of $M$ can be calculated from the data, and there are still $M$ unknown variables in the above equation. We define the probability between the $r$ items with the highest probability under the corresponding data to satisfy the following equation:
\begin{align}
	p_{i} = c_{i}p_{1}, \{1\leq i \leq r,i \in N^{*}\}
\end{align}
\begin{figure*}[t]
	\centering
	\setlength{\abovecaptionskip}{0pt}
	\setlength{\belowcaptionskip}{0pt}
	\includegraphics[width=0.99\textwidth]{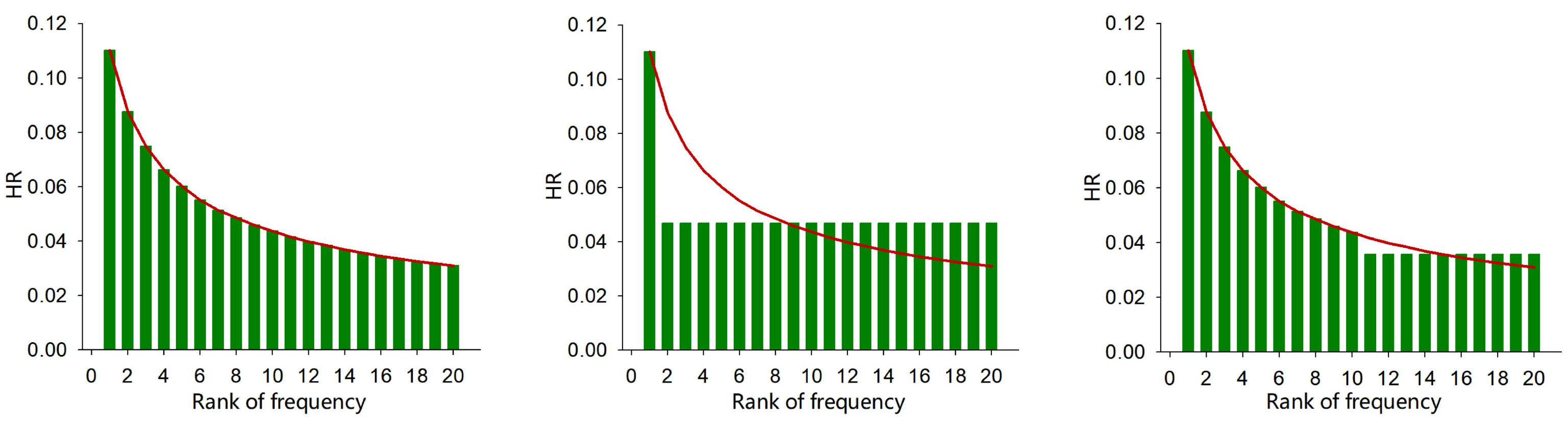} 
	\caption{Schematic diagram of the different scaling forms. The figure on the left shows the true probability distribution of the user's favorite items. The figure in the middle is a scaled version of the existing methods used to calculate the \textit{Top-1} predictability. The figure on the right is the scaled form we used to calculate the \textit{Top-N} predictability. Our scaled distribution is closer to the real distribution, so our calculated limits on the predictability is closer to the predictability.}
	\label{fig_zoom}
	\vspace{0pt}
\end{figure*}

We're going to scale the entropy of the remaining $M-r$ behaviors. The true distribution is $P(X|h)=\left(p_{1},p_{2},\cdot\cdot\cdot,p_{M} \right)$, $P'(X|h) = \left( p_{1} , p_{2} , \cdot\cdot\cdot , p_{r} , \frac{1-\sum_{i=1}^{r}p_{i}}{M-r} , \cdot\cdot\cdot , \frac{1-\sum_{i=1}^{r}p_{i}}{M-r} \right) $ is the new distribution constructed. A schematic of the scaling is shown in Fig. \ref{fig_zoom}. According to the maximum entropy theorem \cite{1957Information}, the entropy of the new distribution is greater than that of the original distribution. Therefore, we can get the following formula:
\begin{align}
	S^{u}(X_{n_{t}}|h_{n_{t}-1}) &\leq S^{u}(X'_{n_{t}}|h_{n_{t}-1})\\	
	&= -p_{1}\log_2p_{1}-p_{2}\log_2p_{2}-\cdot\cdot\cdot-p_{r}\log_2p_{r} - \sum \frac{1-\sum_{i=1}^{r}p_{i}}{M-r}\log_2\frac{1-\sum_{i=1}^{r}p_{i}}{M-r} \\	
	&= -p_{1}\log_2p_{1}-p_{2}\log_2p_{2}-\cdot\cdot\cdot-p_{r}\log_2p_{r}- (1-{\sum}_{i=1}^{r}p_{i})\log_2(1-{\sum}_{i=1}^{r}p_{i}) \notag\\	
	&\quad\,+ (1-{\sum}_{i=1}^{r}p_{i})\log_2(M-r)\\	
	&= S_{F_{r}}(p_{1},p_{2},...,p_{r}) \\
	&= S_{F_{r}}(\pi^{u}_{1}(h_{n_{t}-1}),\pi^{u}_{2}(h_{n_{t}-1}),...,\pi^{u}_{r}(h_{n_{t}-1}))
\end{align}	

If $S_{F_{r}}(\cdot)$ is regarded as a multivariate function, it is concave for $\pi^{u}_{i}(h_{n_{t}-1})$ and decreases monotonically with $\pi^{u}_{i}(h_{n_{t}-1})$ (see \emph{theorem \ref{theorem:1}}). So the $S_{F_{r}}(\cdot)$ function has two important properties where the Eq. (\ref{eq:12}) is Jensen's inequality for concave functions.
\begin{gather}
	S_{F_{r}}\left((a+b)/2 \right) \ge (S_{F_{r}}(a)+S_{F_{r}}(b))/2 \label{eq:12} \\ 
	(S_{F_{r}}(a)-S_{F_{r}}(b))(a-b) \le 0 
\end{gather}

We define the conditional entropy $S^{u}(X_{n_{t}}|h_{n_{t}-1})$ just like Eq. (\ref{eq:2}). From the above derivation, we can further find that $S^{u}(n_{t})$ should meet the following requirements:
\begin{small} 
	\begin{align}
		S^{u}(n_{t}) &= \sum_{h_{n_{t}-1}}P(h_{n_{t}-1})S^{u}(X_{n_{t}}|h_{n_{t}-1}) \notag \\
		&\le \sum_{h_{n_{t}-1}}P(h_{n_{t}-1})S_{F_{r}}(\pi^{u}_{1}(h_{n_{t}-1}),	 \pi^{u}_{2}(h_{n_{t}-1}),...,\pi^{u}_{r}(h_{n_{t}-1})) \notag \\
		&\le S_{F_{r}}(\sum_{h_{n_{t}-1}}P(h_{n_{t}-1})(\pi^{u}_{1}(h_{n_{t}-1}), \pi^{u}_{2}(h_{n_{t}-1}),...,\pi^{u}_{r}(h_{n_{t}-1})) ) \notag \\
		&= S_{F_{r}}(\Pi^{u}_{1}(n_{t}),\Pi^{u}_{2}(n_{t}),...,\Pi^{u}_{r}(n_{t}))		
	\end{align}
\end{small}

Using the conclusion of the above equation, we can further use Jensen's inequality \cite{1974Real} to obtain the correlation between entropy and predictability under the whole dataset.
\begin{small} 
	\begin{align}
		S &= \lim_{n_{u},n_{t} \to \infty}\frac{1}{n_{u}}\sum_{i=1}^{n_{u}}\frac{1}{n_{t}}\sum_{j=1}^{n_{t}}S^{i}(j) \notag\\
		&\le \lim_{n_{u},n_{t} \to \infty}\frac{1}{n_{u}}\sum_{i=1}^{n_{u}}\frac{1}{n_{t}}\sum_{j=1}^{n_{t}}S_{F_{r}}(\Pi^{i}_{1}(j),	\Pi^{i}_{2}(j),...,\Pi^{i}_{r}(j)) \notag \\
		&\le \lim_{n_{u} \to \infty}\frac{1}{n_{u}}\sum_{i=1}^{n_{u}}S_{F_{r}}(\lim_{n_{t} \to \infty}\frac{1}{n_{t}}\sum_{j=1}^{n_{t}}(\Pi^{i}_{1}(j),\Pi^{i}_{2}(j),...,\Pi^{i}_{r}(j))) \notag \\
		&\le S_{F_{r}}(\lim_{n_{u},n_{t} \to \infty}\frac{1}{n_{u}}\sum_{i=1}^{n_{u}}\frac{1}{n_{t}}\sum_{j=1}^{n_{t}}(\Pi^{i}_{1}(j),	\Pi^{i}_{2}(j),...,\Pi^{i}_{r}(j))) \notag \\
		&= S_{F_{r}}(\Pi_{1},\Pi_{2},...,\Pi_{r}) \notag \\
		&= S_{F_{r}}(c_{1}\Pi_{1},c_{2}\Pi_{1},...,c_{r}\Pi_{1})
	\end{align}
\end{small}

Now we define $\Pi^{max}_{1}$ as the solution to the following equation:
\begin{align}
	S 	&= S_{F_{r}}(\Pi^{max}_{1})		\notag	\\ 
	&= -c_{1}\Pi^{max}_{1}\log_2c_{1}\Pi^{max}_{1}-c_{2}\Pi^{max}_{1}\log_2c_{2}\Pi^{max}_{1} 	-\cdot\cdot\cdot-c_{r}\Pi^{max}_{1}\log_2c_{r}\Pi^{max}_{1}		\notag	\\	
	&\quad\,- (1-{\sum}_{i=1}^{r}c_{i}\Pi^{max}_{1})\log_2(1-{\sum}_{i=1}^{r}c_{i}\Pi^{max}_{1}) + (1-{\sum}_{i=1}^{r}c_{i}\Pi^{max}_{1})\log_2(M-r)		\notag	\\	
	& \le S_{F_{r}}(\Pi_{1})  			\label{eq:24}
\end{align}

Because $S_{F_{r}}(\Pi^{max}_{1})\le S_{F_{r}}(\Pi_{1})$, and $S_{F_{r}}(\cdot)$ decreases monotonically with $\Pi$ (see \emph{theorem \ref{theorem:1}}), there is:
\begin{align}
	[S_{F_{r}}(\Pi^{max}_{1})-S_{F_{r}}(\Pi_{1})](\Pi^{max}_{1}-\Pi_{1}) &\leq 0\\
	\Pi^{max}_{1}-\Pi_{1} & \geq 0\\
	\Pi^{max}_{1} & \geq \Pi_{1} 			
\end{align}

Therefore, we put the real entropy of data set $S$, the number of candidate behaviors of users $M$, and the probability ratio of $r$ behaviors with the highest probability $\{c_{1},c_{2},...,c_{r}\}$ into Eq. (\ref{eq:24}). Thus, the limits of \textit{Top-1} predictability under the data is obtained. We multiply the \textit{Top-1} predictability ($\Pi^{max}_{1}$) by $\sum_{i=1}^{r}c_{i}$, and the limits of \textit{Top-r} predictability is obtained.

\begin{theorem}
	\label{theorem:1}
	The Fano function $S_{F}(\cdot)$ is concave and	monotonically decreases with $p_{i},\{1\leq i \leq r,i \in N^{*}\}$.
\end{theorem}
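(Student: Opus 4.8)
The plan is to verify both properties directly from the analytic form of the Fano function by computing its gradient and Hessian. Write $q := 1 - \sum_{i=1}^{r} p_i$ for the residual tail mass, so that
\begin{align}
	S_{F_r}(p_1,\dots,p_r) = -\sum_{i=1}^{r} p_i\log_2 p_i - q\log_2 q + q\log_2(M-r), \notag
\end{align}
and note that $\partial q/\partial p_i = -1$ for every $i$. I would record at the outset the domain restriction that makes the statement precise: since $p_1,\dots,p_r$ are the $r$ largest probabilities and the scaled distribution $P'$ spreads the mass $q$ uniformly over the remaining $M-r$ outcomes, each retained coordinate satisfies $p_i \ge q/(M-r)$.

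First I would establish monotonicity. Differentiating term by term and applying the chain rule to the two $q$-dependent terms, the $\tfrac{1}{\ln 2}$ constants cancel and the gradient collapses to the clean expression
\begin{align}
	\frac{\partial S_{F_r}}{\partial p_i} = -\log_2 p_i + \log_2 q - \log_2(M-r) = \log_2\!\frac{q}{(M-r)p_i}. \notag
\end{align}
This is $\le 0$ exactly when $p_i \ge q/(M-r)$, which is precisely the domain constraint noted above. Hence $\partial S_{F_r}/\partial p_i \le 0$ throughout the relevant region, giving monotone decrease in each $p_i$.

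Next I would establish concavity by showing the Hessian is negative definite. Differentiating the gradient once more yields $\partial^2 S_{F_r}/\partial p_i^2 = -\tfrac{1}{\ln 2}\bigl(1/p_i + 1/q\bigr)$ on the diagonal and $\partial^2 S_{F_r}/\partial p_i \partial p_j = -\tfrac{1}{\ln 2}\,(1/q)$ off the diagonal, so the full Hessian factors as
\begin{align}
	H = -\frac{1}{\ln 2}\left(\operatorname{diag}\!\Bigl(\tfrac{1}{p_1},\dots,\tfrac{1}{p_r}\Bigr) + \tfrac{1}{q}\,\mathbf{1}\mathbf{1}^{\top}\right), \notag
\end{align}
where $\mathbf{1}$ is the all-ones vector. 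The bracketed matrix is the sum of a positive-definite diagonal matrix (all $p_i>0$) and a positive-semidefinite rank-one matrix (since $q>0$), hence positive definite; multiplying by the negative scalar $-1/\ln 2$ makes $H$ negative definite, so $S_{F_r}$ is (strictly) concave. This immediately supplies the two properties quoted after the theorem, namely Jensen's inequality $S_{F_r}((a+b)/2)\ge (S_{F_r}(a)+S_{F_r}(b))/2$ and, combined with monotonicity, the sign condition $(S_{F_r}(a)-S_{F_r}(b))(a-b)\le 0$.

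The computations are routine; the one point that genuinely needs care is the monotonicity step, where the gradient $\log_2\!\bigl(q/((M-r)p_i)\bigr)$ is \emph{not} unconditionally negative but changes sign at $p_i = q/(M-r)$. The crux is therefore to justify that the constraint $p_i \ge q/(M-r)$ is always in force, which follows from the definition of the $p_i$ as the top-$r$ ordered probabilities together with the uniform-tail construction of $P'$. Stating this domain restriction explicitly is what keeps the monotonicity conclusion unambiguous.
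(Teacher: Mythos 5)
Your proof is correct, and its monotonicity half coincides with the paper's own argument: you obtain the same gradient $\log_2\!\bigl(q/((M-r)p_i)\bigr)$, and control its sign with the same domain constraint $p_i \ge q/(M-r)$, justified exactly as the paper does from the top-$r$ ordering and the uniform tail of $P'$. Where you genuinely depart from the paper is the concavity half. The paper only computes the pure second partials $\partial^2 S_{F_r}/\partial p_i^2 = -\tfrac{1}{\ln 2}\bigl(1/p_i + 1/q\bigr) < 0$ and, separately, the second derivative along the one-dimensional ray $p_i = c_i p_1$; it never examines the mixed partials. Coordinate-wise concavity (or concavity along a single ray) does not by itself imply joint concavity of a multivariate function, yet joint concavity is precisely what the surrounding derivation needs when Jensen's inequality is applied to the vector argument $(\pi^{u}_{1}(h),\dots,\pi^{u}_{r}(h))$ averaged over histories. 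Your computation of the full Hessian, with off-diagonal entries $-\tfrac{1}{q\ln 2}$, and the factorization $H = -\tfrac{1}{\ln 2}\bigl(\operatorname{diag}(1/p_1,\dots,1/p_r) + \tfrac{1}{q}\,\mathbf{1}\mathbf{1}^{\top}\bigr)$, which is negative definite as minus the sum of a positive-definite diagonal matrix and a PSD rank-one matrix, closes this gap and yields strict joint concavity. So your route buys exactly what the downstream argument requires, and is strictly stronger than the paper's per-coordinate check; the paper's one-variable reduction via $p_i = c_i p_1$ is subsumed by your result, since joint concavity together with nonpositive partial derivatives gives the same monotonicity and concavity along any ray with positive coefficients $c_i$.
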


\begin{proof}
	The $S_{F}(\cdot)$ function is expressed as follows:
	\begin{align}
		S_{F_{r}}(p_{1},...,p_{r}) &= -p_{1}\log_2p_{1}-p_{2}\log_2p_{2}-\cdots-p_{r}\log_2p_{r}		- (1-{\sum}_{i=1}^{r}p_{i})\log_2(1-{\sum}_{i=1}^{r}p_{i}) \notag	\\
		&\quad\,+ (1-{\sum}_{i=1}^{r}p_{i})\log_2(M-r)
	\end{align}
	
	We take the derivative of a single variable, such as $p_{1}$, and the result is as follows:
	\begin{align}
		\frac{\partial S_{F_{r}}(p_{1},...,p_{r})}{\partial p_{1}} &= -(\log_2p_{1}  +\frac{1}{\ln2})+\log_2(1-{\sum}_{i=1}^{r}p_{i})	+\frac{1}{\ln2}-\log_2(M-r)	 \notag	\\
		&= -\log_2(p_{1}/\frac{1-{\sum}_{i=1}^{r}p_{i}}{M-r}) < 0 \label{eq:5}
	\end{align}
	
	Since we scaled the candidate items with the lowest probability $(M-r)$, the probability after scaling was still less than the probability of the $r$ items with the highest probability. We have $p_{i}>\frac{1-{\sum}_{i=1}^{r}p_{i}}{M-r},\{1\leq i \leq r,i \in N^{*}\}$. The above derivation can obtain that the first-order derivative of $S_{F}(\cdot)$ with respect to $p_{i}$ is less than $0$. So we know that $S_{F}(\cdot)$ is decreasing monotonically with respect to $p_{i}$.
	\begin{align}
		\frac{\partial^{2} S_{F_{r}}(p_{1},p_{2},...,p_{r})}{\partial p_{1}^{2}} &= 
		-\frac{1}{p_{1}\ln2}-\frac{1}{(1-{\sum}_{i=1}^{r}p_{i})\ln2} < 0 \label{eq:6}
	\end{align}
	
	We can see from the above that the second derivative of $S_{F}(\cdot)$ with respect to $p_{i}$ is less than $0$, so $S_{F}(\cdot)$ is concave with respect to $p_{i}$.
	
	If we convert $S_{F}(\cdot)$ to a function of one variable and use the formula $p_{i}=c_{i}p_{1}$ to make all $p_{i}$ be expressed by $p_{1}$, the expression of $S_{F}(\cdot)$ will be as follows:
	\begin{align}
		S_{F_{i}}(p_{1}) &= -c_{1}p_{1}\log_2c_{1}p_{1}-c_{2}p_{1}\log_2c_{2}p_{1}-\cdots-		c_{r}p_{1}\log_2c_{r}p_{1}- (1-\sum\limits_{i=1}^{r}c_{i}p_{1})\log_2(1-\sum\limits_{i=1}^{r}c_{i}p_{1}) \notag	\\	
		&\quad\,+ (1-\sum\limits_{i=1}^{r}c_{i}p_{1})\log_2(M-r)
	\end{align}
	
	If we take the derivative of $S_{F}(\cdot)$ for $p_{1}$, we get the following result:
	\begin{align}
		\frac{\partial S'_{F}(p_{1})}{\partial p_{1}} &= 
		-(c_{1}\log_2c_{1}p_{1}+\frac{c_{1}}{\ln2})-(c_{2}\log_2c_{2}p_{1}+\frac{c_{2}}{\ln2})-\cdots-(c_{r}\log_2c_{r}p_{1}+\frac{c_{r}}{\ln2})\notag	\\
		&\quad\,+ {\sum}_{i=1}^{r}c_{i}\log_2(1-{\sum}_{i=1}^{r}c_{i}p_{1}) +\frac{{\sum}_{i=1}^{r}c_{i}}{\ln2}- {\sum}_{i=1}^{r}c_{i}\log_2(M-r) \notag \\
		&=-c_{1}\log_2c_{1}p_{1}-c_{2}\log_2c_{2}p_{1}-\cdots-c_{r}\log_2c_{r}p_{1}	+ {\sum}_{i=1}^{r}c_{i}\log_2(1-{\sum}_{i=1}^{r}c_{i}p_{1})\notag	\\	
		&\quad\,
		- {\sum}_{i=1}^{r}c_{i}\log_2(M-r) \notag \\	
		&=-{\sum}_{i=1}^{r}c_{i}\log_2(c_{i}p_{1}/\frac{1-{\sum}_{i=1}^{r}c_{i}p_{1}}{M-r}) < 0
	\end{align}
	
	In the same way, we have $p_{i}>\frac{1-{\sum}_{i=1}^{r}p_{i}}{M-r},\{1\leq i \leq r,i \in N^{*}\}$, which is $c_{i}p_{1}>\frac{1-{\sum}_{i=1}^{r}p_{i}}{M-r},\{1\leq i \leq r,i \in N^{*}\}$. So we know that $S_{F}(\cdot)$ is monotonically decreasing with respect to $p_{1}$.
	
	We further differentiate $S'_{F}(\cdot)$ with respect to $p_{1}$, and the result is as follows:
	\begin{align}
		\frac{\partial^{2} S_{F}(p_{1})}{\partial p_{1}^{2}} &= -\frac{c_{1}}{p_{1}\ln2}-\frac{c_{2}}{p_{1}\ln2}-\cdots-\frac{c_{r}}{p_{1}\ln2} 
		-\frac{({\sum}_{i=1}^{r}c_{i})^2}{(1-{\sum}_{i=1}^{r}c_{i}p_{1})\ln2} < 0
	\end{align}
	
	Since $c_{i}$ and $p_{1}\ln2$ are both positive, we know that $(1-{\sum}_{i=1}^{r}c_{i}p_{1})$ is greater than $0$. So the second derivative of $S_{F}(\cdot)$ with respect to $p_{1}$, and finally we know that $S_{F}(\cdot)$ is concave.
\end{proof}

\subsection{Theoretical relationship of predictability at different scaling scales}
From the above derivation, it can be seen that when we calculated the predictability of \textit{Top-r}, we also calculated the predictability of \textit{Top-1} in this scaling form. This leads us to think about a question: which one is closer to the real predictability, the \textit{Top-1} predictability calculated by this method, or the \textit{Top-1} predictability calculated by the traditional approach? Next, we will examine the relationship between the calculated predictability and the real predictability under different scaling forms. 
When the probability ratio $\{c_{1},c_{2},\cdot\cdot\cdot,c_{r}\}$ of \textit{Top-r} items is captured, we can obtain the \textit{Top-1} predictability ($\Pi^{max}_{1,r}$) through Eq. (\ref{eq:24}). If we just use the ratio of the probabilities of \textit{Top-(r-1)} items, then we get the new \textit{Top-1} predictability ($\Pi^{max}_{1,r-1}$).


According to Eq. (\ref{eq:24}), it can be known that $S_{F_{r}}(\Pi^{max}_{1,r})\le S_{F_{r}}(\Pi_{1})$, the above formula is still true when $r$ is $r-1$, so we have $S_{F_{r-1}}(\Pi^{max}_{1,r-1}) \le S_{F_{r-1}}(\Pi_{1})$. From the above derivation we know that both $\Pi^{max}_{1,r}$ and $\Pi^{max}_{1,r-1}$ are limits on predictability. $S_{F_{r}}(\cdot)$ and $S_{F_{r-1}}(\cdot)$ are concave functions and decrease monotonically with $\Pi$. And we have the following formula:
\begin{align}
	S_{F_{r}}(\Pi^{max}_{1,r}) &=S\left(\Pi^{max}_{1,r},\cdot\cdot\cdot,c_{r}\Pi^{max}_{1,r},\frac{1-\sum\limits_{i=1}^{r}c_{i}\Pi^{max}_{1,r}}{M-r},\cdot\cdot\cdot,\frac{1-\sum\limits_{i=1}^{r}c_{i}\Pi^{max}_{1,r}}{M-r}\right) \notag\\ 
	&\leq S\left(\Pi^{max}_{1,r},\cdot\cdot\cdot,c_{r-1}\Pi^{max}_{1,r},\frac{1-\sum\limits_{i=1}^{r-1}c_{i}\Pi^{max}_{1,r}}{M-(r-1)},\cdot\cdot\cdot,\frac{1-\sum\limits_{i=1}^{r-1}c_{i}\Pi^{max}_{1,r}}{M-(r-1)}\right) \notag\\
	&= S_{F_{r-1}}(\Pi^{max}_{1,r})
\end{align}


Since the entropy of $S_{F_{r}}(\Pi^{max}_{1,r})$ and $S_{F_{r-1}}(\Pi^{max}_{1,r})$ in the previous $(r-1)$ part is the same, the entropy of $S_{F_{r-1}}(\Pi^{max}_{1,r})$ is greater than $S_{F_{r}}(\Pi^{max}_{1,r})$ in the remaining part of $M-(r-1)$, so $S_{F_{r}}(\Pi^{max}_{1,r}) \le S_{F_{r-1}}(\Pi^{max}_{1,r})$.
And then we know that  $S=S_{F_{r-1}}(\Pi^{max}_{1,r-1})= S_{F_{r}}(\Pi^{max}_{1,r})$. So we can get:
\begin{align}
	S_{F_{r-1}}(\Pi^{max}_{1,r-1})= S_{F_{r}}(\Pi^{max}_{1,r})\leq S_{F_{r-1}}(\Pi^{max}_{1,r})
\end{align}

And since the $S_{F_{r-1}}(\cdot)$ function is monotonically decreasing with $\Pi$, so $\Pi^{max}_{1,r} \le \Pi^{max}_{1,r-1}$. Repeating the above derivation, we can get the following results:
\begin{align}
	\Pi_{1} \leq \Pi^{max}_{1,r} \leq \Pi^{max}_{1,r-1} \leq \cdot\cdot\cdot \leq \Pi^{max}_{1,1}
\end{align}

Therefore, we can know that the method derived from $\{c_{1},c_{2},\cdot\cdot\cdot,c_{r}\}$ to obtain the predictability of \textit{Top-r} can not only obtain the predictability of \textit{Top-r}, but also obtain the predictability of \textit{Top-1} which is closer to the true value than that obtained from the traditional method. The deviation between the predictability of \textit{Top-r} and the true predictability decreases with the increase of $r$.



\section{Experiments}
In this part, we first introduce the real-world datasets used in the paper in Section \ref{experiment:1}. In Section \ref{ten_alg}, we present the algorithms that are popular for \textit{Top-N} recommendations. In Section \ref{experiment:3}, we analyze the pattern obeyed by the $N$ items with the highest probability to obtain the probability relationship between the \textit{Top-N} items. 
Section \ref{experiment:4}, we introduce how to generate data with specified predictability to set the stage for later experiments under generated data. 
In Section \ref{experiment:5} our experiments on the generated data compare the measurement bias of predictability of existing methods with that of our approach to show that our effect is greatly improved. Also, in this section, we quantify the measure biases under different scaling forms to derive more accurate predictability and make the measure biases in a table for easy access by others. 
Finally, in Section \ref{experiment:6}, we compare the accuracy achieved by existing algorithms with the predictability calculated by our method to verify whether our approach can capture the variation in the dataset's prediction difficulty and understand the room for improvement of the existing accuracy.

\subsection{Datasets} \label{experiment:1}
We conduct experiments on four real-world datasets which record the user purchase histories with the details as below:
\begin{itemize}
	\item DUNN: A dataset contains offline transactions from about 2,000 households over two years $\footnote{\url{https://www.dunnhumby.com/source-files/}}$
	\item INSTA: This data set was published by Instacart, a website that offers same-day grocery delivery in the US $\footnote{\url{https://www.kaggle.com/c/instacart-market-basket-analysis/data}}$. It contains more than 3 million grocery orders from more than 200,000 users. There is no specific date for each order, but the order of transactions for each user is provided.
	\item RSC15: It was released in the ACM RECSYS 2015 Challenge, which included records of users' online purchases and clicks sequences over six months $\footnote{\url{https://www.dropbox.com/sh/7qdquluflk032ot/AACoz2Go49q1mTpXYGe0gaANa?dl=0}}$.
	\item TMALL: This dataset was released in the context of the TMALL contest and contained a year's worth of interactive logs from TMALL $\footnote{\url{https://tianchi.aliyun.com/dataset/dataDetail?dataId=42}}$.
\end{itemize}
\begin{table}
	\caption{Basic characteristics of the four datasets.}
	\label{tab_dataset}
	\centering
	\begin{tabular}{lcccc}
		\toprule
		Dataset &\#user		&\#item 		&\#transaction & \#timespan\\
		\midrule
		Dunnhumby 	&2,500  	&26,780  	&269,974 &2 years \\
		Instacart 	&206,120  	&42,987 	&3,345,786 &30 days  \\
		RSC15 		&186,600	&28,582  	&5,426,961 &182 days \\
		TMALL 		&131,450	&425,348  	&13,418,695 &90 days \\
		\bottomrule
	\end{tabular}
\end{table}

\subsection{Top-N recommendation algorithms used in the study} \label{ten_alg}
To evaluate the performance of our predictability method, for each of the above datasets, we calculate its \textit{Top-N} accuracy limits, denoted as $V_u$, and compare it to the best of the $M$ well-performed recommendation algorithms' results, say $V_r = \argmax\{R_i\}$ where $R_i$ the \textit{Top-N} accuracy produced by the $i^{th}$ recommendation algorithm. The closer these two values  $V_u$ and $V_r$, the better our predictability method performs. In this study, we set $M=10$, and the details of the recommendation algorithms are as below:
\begin{itemize}	
	\item \emph{Simple Association Rules}: AR \cite{2001Mining} is a simplified version of the association rule mining technique. The method is designed to capture the frequency of two co-occurring events. 
	\item \emph{Markov Chains (MC)}: MC \cite{2004General} can be considered as a variant of AR that focuses on transformational relationships in sequential data.  
	\item \emph{Item-based KNN (IKNN)}: The IKNN \cite{barkan2016item2vec} method considers only the last item in a given sequence and then returns the item with the highest frequency of co-occurrence with that item in the data as a recommendation.
	\item \emph{Sequential Rules (SR)}: The SR \cite{2008An} is a variant of MC or AR, respectively. It also considers the order of behavior but is less restrictive.
	\item \emph{Factorized Personalized Markov Chains (FPMC)} \cite{rendle2010factorizing}: In order to implement recommendation scenarios considering user-taste and sequential information, this algorithm models a markov transition matrix for each user to fuse both sequential and personalized information.
	\item \emph{Factored Item Similarity Models (FISM)}: FISM \cite{2017Factored} belongs to the classical approach of \textit{Top-N} recommendation algorithms, which extends the item-based hidden factor approach to \textit{Top-N} problems. 
	\item \emph{Factorized Sequential Prediction with Item Similarity Models (FOSSIL)} \cite{he2016fusing}: This method combines FISM with decomposable Markov Chains to integrate sequence information into the model.
	\item \emph{Session-based Matrix Factorization}: SMF \cite{shi2013mining} is a factorization-based recommendation model designed for serialization-based recommendation tasks. 
	\item \emph{Bayesian Personalized Ranking (BPR)}: BPR-MF \cite{2018Recommendation} is a learning ranking method for implicit feedback recommendation scenarios. 
	\item \emph{Gru4Rec}: Gru4Rec \cite{hidasi2015session} models user behavior sequences by RNN with gated recurrent units to learn the evolution patterns between pre and post behaviors.
\end{itemize}

\begin{figure*}[t]
	\centering
	\setlength{\abovecaptionskip}{0pt}
	\setlength{\belowcaptionskip}{0pt}
	\includegraphics[width=0.98\textwidth]{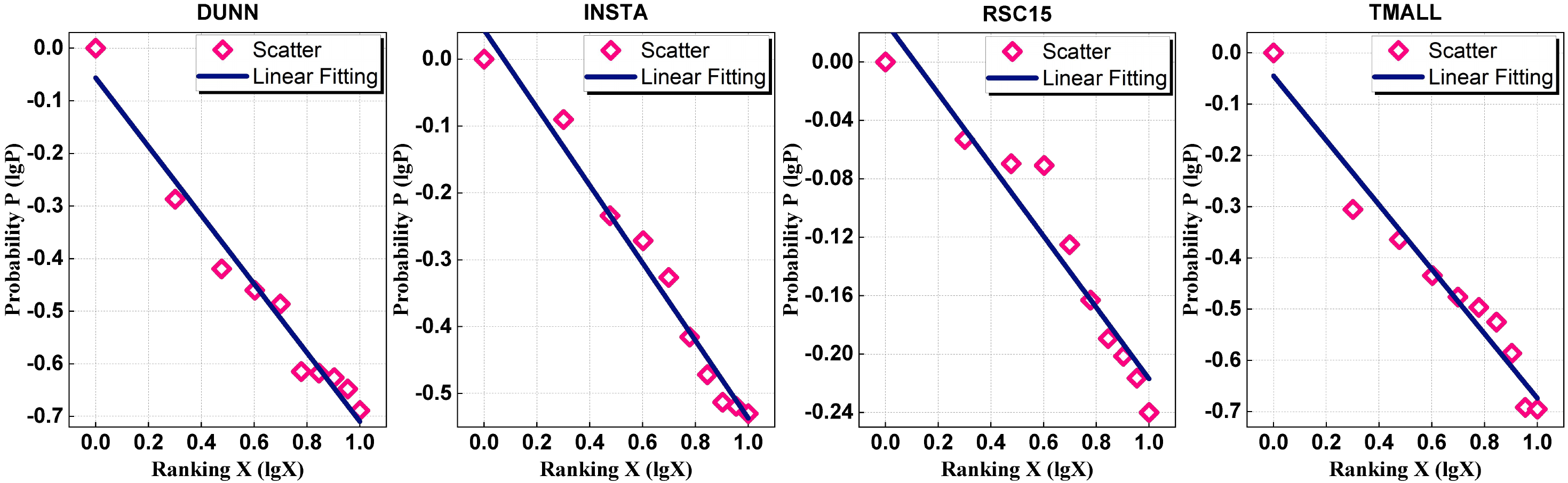} 
	\caption{The probabilistic proportional distribution of the user's favorite items. The frequency of an item's occurrence is inversely proportional to its ranking in the frequency table.}
	\label{fig_ranking}
	\vspace{0pt}
\end{figure*}
\subsection{Distribution of Shopping Frequency}\label{experiment:3}
We counted the historical behavior of the group under the whole dataset and calculated the number of interactions of the items in order from the largest to the smallest. After the frequency is obtained, it is uniformly divided by the highest frequency for normalization so as to get the $\{c_{1},c_{2},...,c_{r}\}$ of the population. The patterns of human shopping behavior are quite uneven, so the frequency $f_{k}$ of the $k^{th}$ most-interacted item conforms to Zipf's law \cite{kluckhohn1950human}. The distribution is shown in Fig. \ref{fig_ranking}, and the formula is as follows:
\begin{align}
	f_{k}\backsim k^{-\xi}
\end{align}

The experimental results show that $\xi=0.6\pm 0.07$ on the four classical datasets in the paper. This indicates that the frequency distribution of user interaction behavior is in line with $P(f)\backsim f^{-(1+1/\xi)}$. When $r$ is greater than a certain number, there is a large gap between the value of $c_{r}$ and $c_{1}$, which will cause the deviation of the predictability of \textit{Top-1} to be passed on multiple times. This will put a high demand on the quantization precision of $c_{r}$. Therefore, in order to avoid this situation, we chose to intercept $r$ as $10$, $c_{10}$ is around $0.25$, so that the value difference between $c_{i}$ is not big.


\subsection{Generates Data with Specified Predictability} \label{experiment:4}
\subsubsection{The first generation method}
\begin{figure}[t]
	\centering
	\subfloat{
		\includegraphics[width=0.35\textwidth]{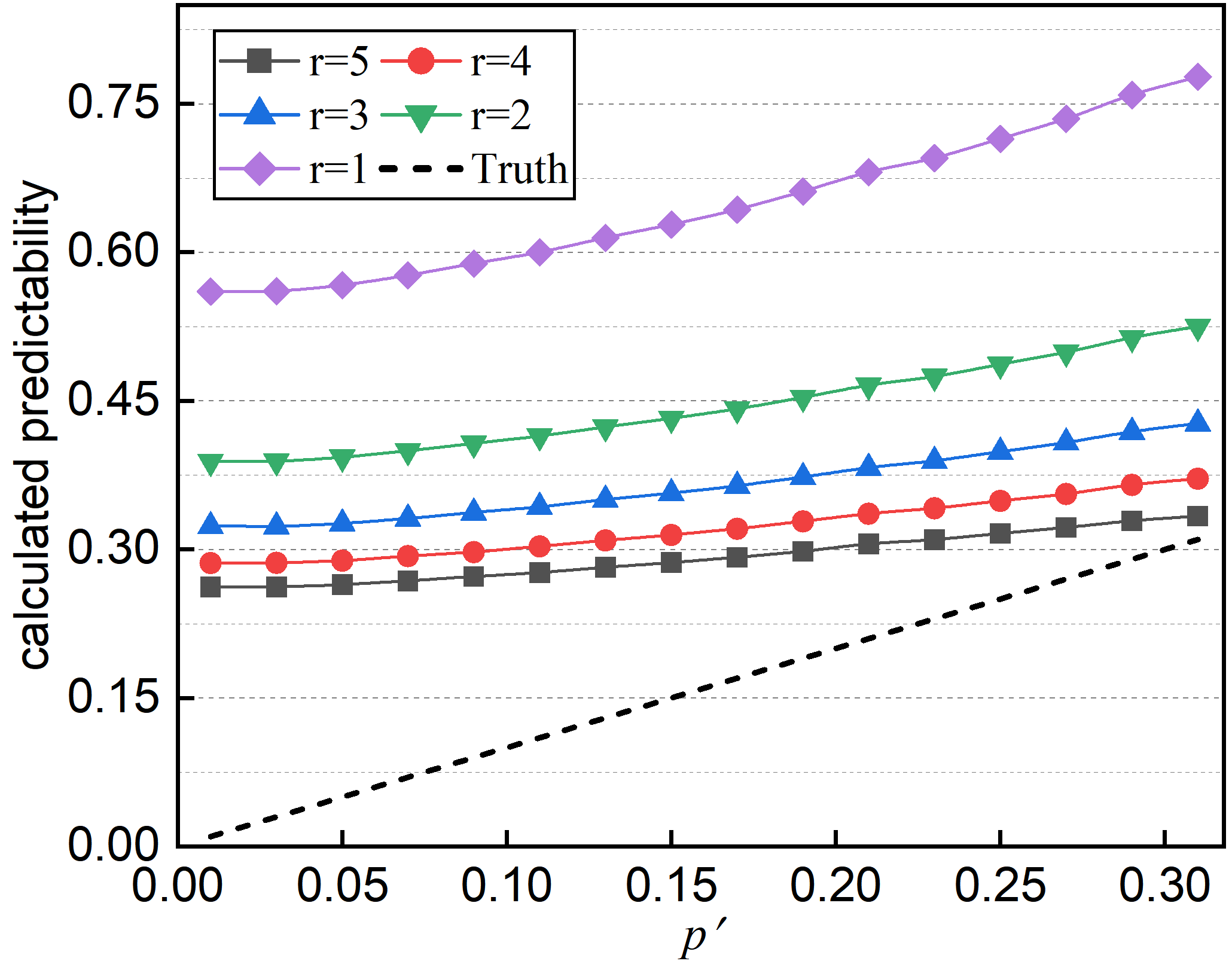}
	}
	\hspace{5mm}
	\subfloat{
		\includegraphics[width=0.35\textwidth]{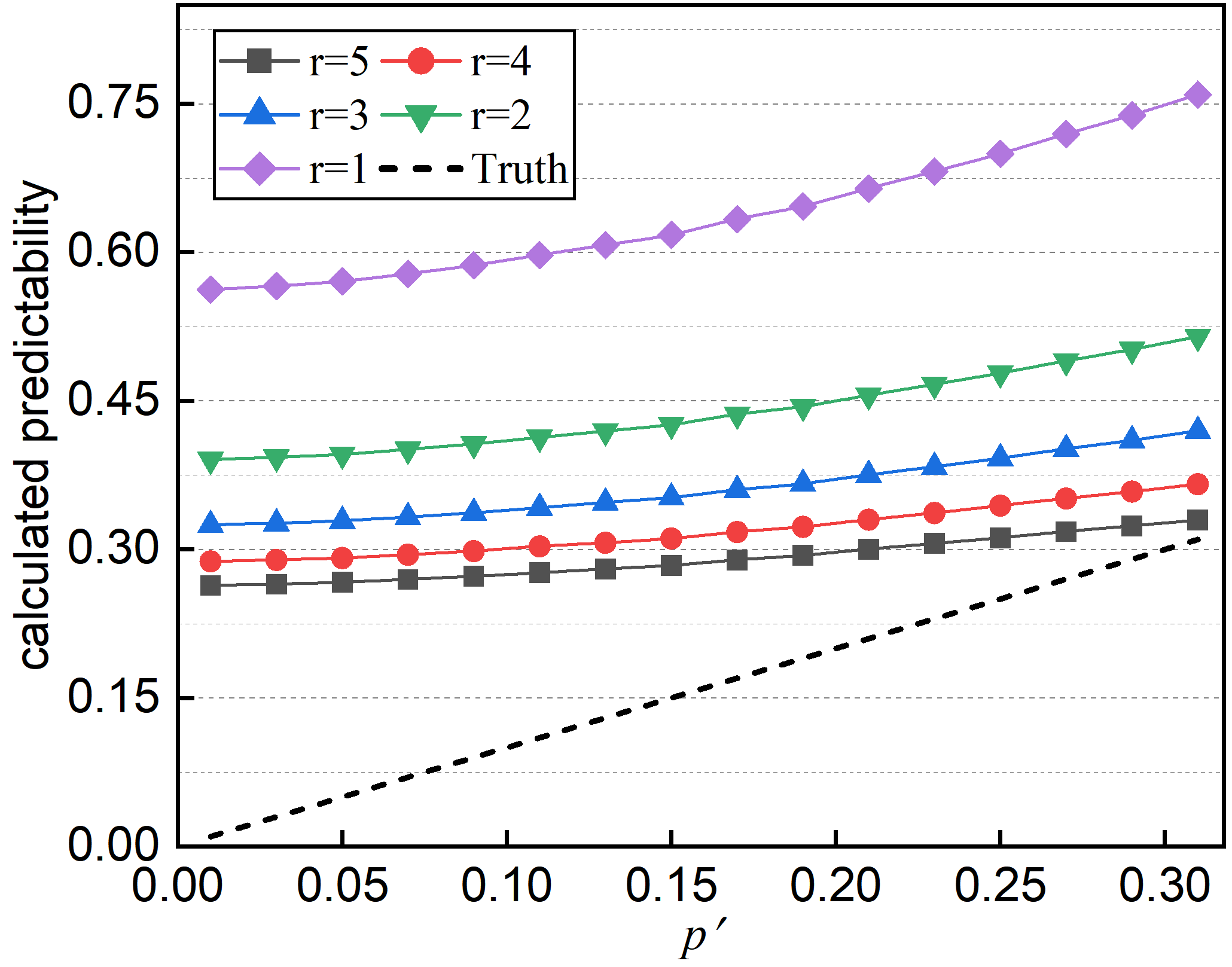}
	}
	\caption{The results of the predictability evaluation under the generated data}
	\label{fig_GD}
\end{figure}
To truly demonstrate the effectiveness of our method from experimental data, we generate behavior sequences with known predictability. An effective estimate of the predictability of the sequence can indicate the accuracy of the work.
We generate the corresponding Markov sequences \cite{1957Journal} and regulate the predictability of the sequences by adjusting the transition probability $p$. This generates a sequence of behaviors whose predictability is known. For example, we set $p=0.2$, and $c_{1}=1,c_{2}=0.7,c_{3}=0.6,c_{4}=0.5$. Therefore, the predictability of the four behaviors with the highest probability follows $\Pi_{1}=0.2,\Pi_{2}=0.14,\Pi_{3}=0.12,\Pi_{4}=0.1$. The predictability of $\{\textup{Top-1},\textup{Top-2},\textup{Top-3},\textup{Top-4}\}$ are $0.2, 0.34, 0.46, 0.56$, respectively. 
We consider a simple generative rule where the next moment state is only relevant to the current state, according to the following Markovian transfer matrix
\begin{equation}
	\label{TransfetMatrix}
	\begin{blockarray}{ccccccc}
		&A&B&C&D&E&R\\
		\begin{block}{c[cccccc]}
			A&	c_{1}p&	c_{2}p&	c_{3}p&	c_{4}p&	c_{5}p&	1-\sum_{i=1}^{5}c_{i}p\\
			B&	1-\sum_{i=1}^{5}c_{i}p&	c_{1}p&	c_{2}p&	c_{3}p&	c_{4}p&	c_{5}p\\
			C&	c_{5}p&	1-\sum_{i=1}^{5}c_{i}p&	c_{1}p&	c_{2}p&	c_{3}p&	c_{4}p\\
			D&	c_{4}p&	c_{5}p&	1-\sum_{i=1}^{5}c_{i}p&	c_{1}p&	c_{2}p&	c_{3}p\\
			E&	c_{3}p&	c_{4}p&	c_{5}p&	1-\sum_{i=1}^{5}c_{i}p&	c_{1}p&	c_{2}p\\
			R&	c_{2}p&	c_{3}p&	c_{4}p&	c_{5}p&	1-\sum_{i=1}^{5}c_{i}p&	c_{1}p\\
		\end{block}
	\end{blockarray}.
\end{equation}

First, we set the number of states to be $M$, so the number of random states represented by $R$ is $M-5$. We set $\xi=0.6$ to determine the relationship between $c_{i}$. We repeat this operation $L$ times, resulting in a behaviors sequence of length $L$ with known predictability.
The Figure \ref{fig_GD} on the left shows the results of the predictability method under the first generation method. The predictability of the data is changed by adjusting $p^{'}$, which is $c_{1}p$, where the black dashed line represents the true value, $r=1$ is the result of the existing method $\Pi^{max}_{1,1}$, and $r=5$ represents the predictability result $\Pi^{max}_{1,5}$ obtained by our method which incorporates the \textit{Top-5} item relationship into the scaling. The figure also shows the results for other values of r, $r={2,3,4}$. From this it can be seen that the grey curve corresponding to $r=5$ is much more forced to the dashed line and works significantly better. The Figure \ref{fig_length} on the left shows the effect of the length of the sequence on the method. It can be seen that our method stabilises at $2^9$, while the existing method stabilises at $2^{11}$, and we have a more relaxed requirement for the sequence length to get good evaluation results earlier.

\subsubsection{The second generation method}
\begin{figure}[t]
	\centering
	\subfloat{
		\includegraphics[width=0.35\textwidth]{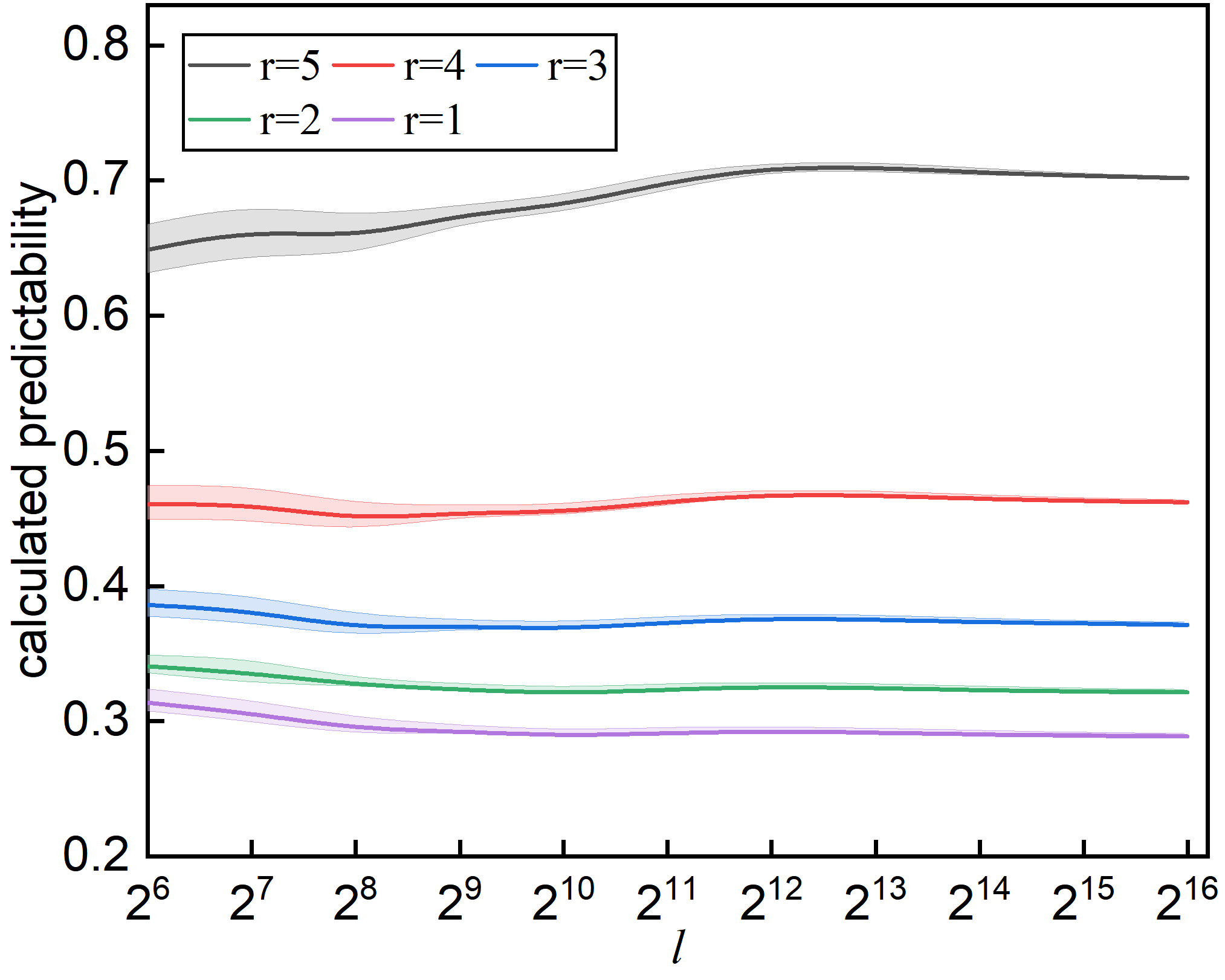}
	}
	\hspace{5mm}
	\subfloat{
		\includegraphics[width=0.35\textwidth]{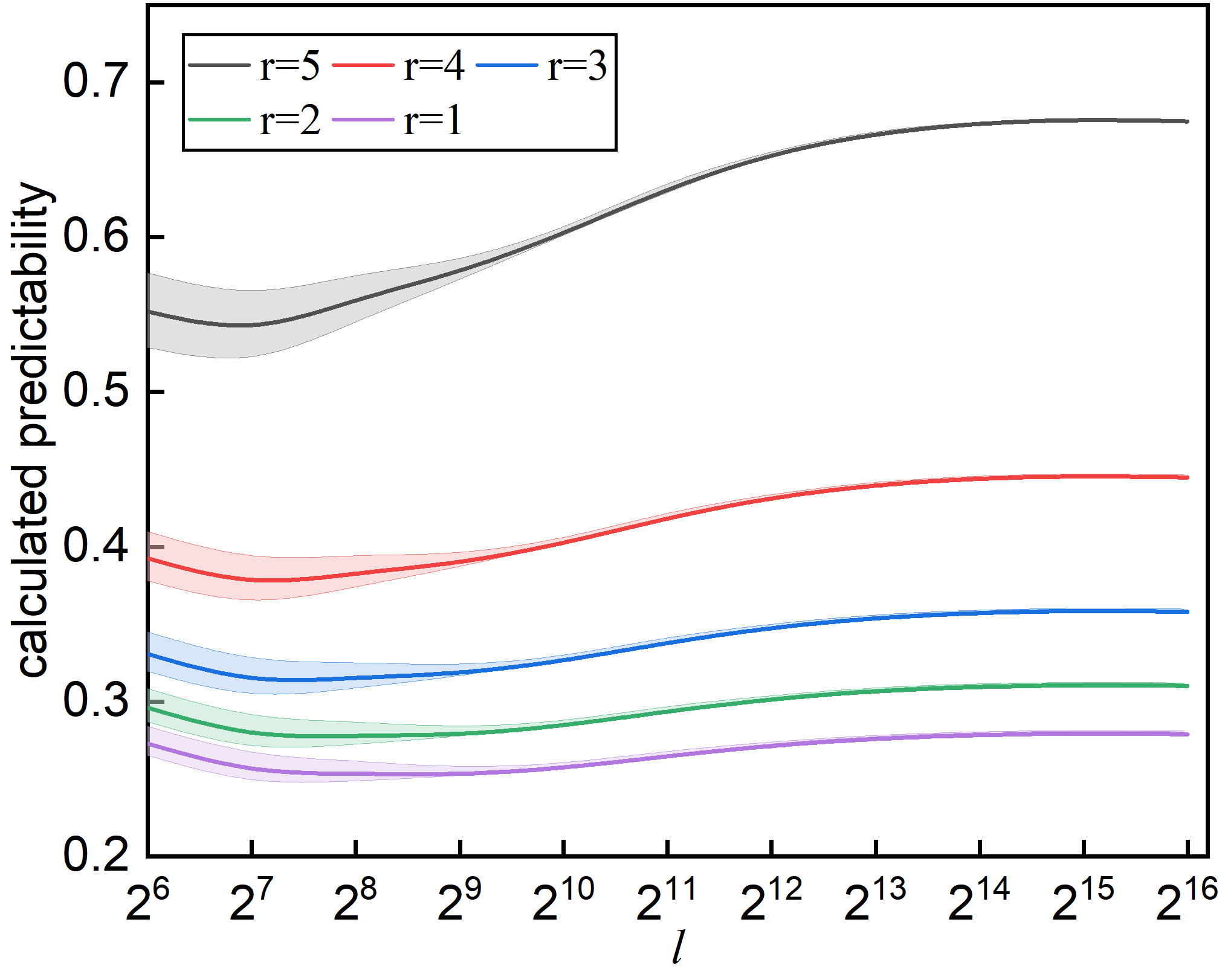}
	}
	\caption{The effect of sequence length on predictability methods}
	\label{fig_length}
\end{figure}
We consider a more complex generation where the next moment state is no longer a simple probabilistic transition of the current state, but depends on the previous two states. We assume that the set of states is $\Omega = \{S_1,S_2,\cdots,S_M\}$, the state at time $t-1$ is $\omega^{t-1}=S_i$, the current state is $\omega^t=S_j$, and the set of probabilities is $p=[c_{1}p,c_{2}p,c_{3}p,c_{4}p,c_{5}p]$, then the set of cumulative probabilities is $p=[c_{1}p,\sum_{i=1}^{2}c_{i}p,\sum_{i=1}^{3}c_{i}p,\sum_{i=1}^{4}c_{i}p,\sum_{i=1}^{5}c_{i}p]$. We randomly generate the next moment state with probability $p$ between 0 and 1 when $p>\sum_{i=1}^{5}c_{i}p$. When $\sum_{i=1}^{x}c_{i}p< p\leq\sum_{i=1}^{x+1}c_{i}p$, the next moment state is $\omega^{t+1}=S_k$, $k=i+j+x$ (if $k>M$, we set $k \leftarrow k-M$).
Obviously, the true \textit{Top-1} predictability is $T=c_{1}p+(1-\sum_{i=1}^{5}c_{i}p)/M\approx c_{1}p$.

In the second generation method, we also set $\xi=0.6$. The Figure \ref{fig_GD} on the right shows the results of the predictability method under the second generation method. The curve represents the same meaning as the figure on the left. We can see that our method still shows great advantages in the second generation method. The Figure \ref{fig_length} on the right shows the effect of the length of the sequence on the method. It can be seen that our method stabilises at $2^{11}$, while the existing method stabilises at $2^{13}$. All methods in the second generation method have more stringent requirements on length, but our method still has relatively relaxed requirements and reaches the steady state faster.

\begin{table*}
	\caption{Measurement deviation under different scaling forms on a sequence with the \textit{Top-1} predictability of $0.2$. $S_{F_{1}}$ represents the existing method, and the rest is our proposed scaling form.}
	\label{tab:1}
	\centering
	\begin{tabular}{lcccccccccc}
		\toprule
		&$S_{F_{1}}$	&$S_{F_{2}}$	&$S_{F_{3}}$	&$S_{F_{4}}$	&$S_{F_{5}}$	&$S_{F_{6}}$	&$S_{F_{7}}$	&$S_{F_{8}}$	&$S_{F_{9}}$	&$S_{F_{10}}$ \\
		\midrule
		Deviation	&261\%	&132\%	&85\%	&58\%	&41\%	&29\%	&20\%	&13\%	&7\%	&2\% \\
		\bottomrule
	\end{tabular}
\end{table*}

\subsection{Quantify the Predictability Measure Bias}\label{experiment:5}
Based on the real-world dataset results, we set $p=0.2, \xi=0.6$ for the hyperparameters of the above generation sequence method. Thus, sequences with known predictability are generated. By our method, the predictability of $S_{F_{i}}$ for different scaling forms can be obtained. The results are shown in Table \ref{tab:1}, and the deviation of predictability calculated by Song's method is 261\%. As the scaling part gets smaller, that is, the $i$ in $S_{F_{i}}$ keeps increasing, the measurement deviation gets smaller. The minimum measurement deviation is 2\%, which is a considerable improvement.

To further understand the true predictability, we adjusted $p$ and $\xi$ to calculate the predictability under different scenarios and the corresponding deviation. According to the previous experiment, the value of $\xi$ in the recommended scenario is $0.6\pm 0.07$. Finally, we set the $p\in \{0.01,0.02,...,0.62\},\xi\in \{0.53,0.54,...,0.67\}$, and the subscript $i$ of $S_{F_{i}}$ follows $i\in \{1,2,...,10\}$. We calculated the predictability bias at various scales for direct review by other researchers $\footnote{\url{https://drive.google.com/drive/folders/1sEFufZHiyuhd0d-Lgz1VmbokK7_lZ0LB}}$. We can get the deviation between $\Pi^{max}_{1}$ and $\Pi_{1}$ by querying the table after the calculated predictability, and $\xi$ is obtained, to estimate the value closer to the real predictability. After the re-estimation, the new predictability is closer to the real predictability.

\begin{table*}[!htb]
	\caption{The best accuracy performance of the ten recommendation algorithms in the evaluation indexes from \textit{Top-1} to \textit{Top-10}.}
	\label{tab:2}
	\centering
	\begin{tabular}{lcccccccccc}
		\toprule
		Dataset &Top-1	&Top-2	&Top-3	&Top-4	&Top-5	&Top-6	&Top-7	&Top-8	&Top-9	&Top-10\\
		\midrule
		DUNN	&0.0953	&0.1473	&0.1846	&0.2128	&0.2361	&0.2560	&0.2734	&0.2881	&0.3017	&0.3150	\\
		INSTA	&0.0378	&0.0645	&0.0858	&0.1039	&0.1198	&0.1326	&0.1440	&0.1554	&0.1655	&0.1752	\\
		RSC15 	&0.2252	&0.3268	&0.3834	&0.4255 &0.4517	&0.4759	&0.4933	&0.5072	&0.5184 &0.5311	\\
		TMALL	&0.2248	&0.2752	&0.2943	&0.3038	&0.3099	&0.3146	&0.3184	&0.3213	&0.3240	&0.3261	\\
		\bottomrule
	\end{tabular}
\end{table*}
\begin{table*}[!htb]
	\caption{The results of predictability under the evaluation indicators \textit{Top-1} to \textit{Top-10} in four real-world datasets.}
	\label{tab:3}
	\centering
	\begin{tabular}{lcccccccccc}
		\toprule
		Dataset &Top-1	&Top-2	&Top-3	&Top-4	&Top-5	&Top-6	&Top-7	&Top-8	&Top-9	&Top-10 \\
		\midrule
		DUNN	&0.2185	&0.3335	&0.4175	&0.4951	&0.5655	&0.6181	&0.6702	&0.7222	&0.7706	&0.8148	\\
		INSTA	&0.1690	&0.3043	&0.3991	&0.4850	&0.5613	&0.6241	&0.6787	&0.7294	&0.7796	&0.8287	\\
		RSC15 	&0.1852	&0.3211	&0.4246	&0.5196	&0.5971	&0.6629	&0.7257	&0.7878	&0.8476	&0.9051	\\
		TMALL	&0.1810	&0.2893	&0.3905	&0.4773	&0.5602	&0.6260	&0.6737	&0.7196	&0.7613	&0.7997	\\
		\bottomrule
	\end{tabular}
\end{table*}

\subsection{Performance of Algorithms on Real-World Datasets}\label{experiment:6}
We selected 10 popular algorithms of the \textit{Top-N} recommender system and obtained the highest accuracy of the algorithms under four datasets. These ten algorithms are the ones we introduced in Section \ref{ten_alg}.
The best performance results of the $10$ algorithms under each accuracy evaluation index are shown in Table \ref{tab:2} (See the detailed results in \emph{Appendix, The detailed results of predictability on real-world datasets}). At the same time, we calculate the predictability of \textit{Top-N} under each data set. First, we calculated the \textit{Top-1} predictability through Eq. (\ref{eq:24}) then looked up the table to get the scaling deviation, and further obtained the more accurate \textit{Top-1} predictability. 
Finally, we calculated the predictability from \textit{Top-1} to \textit{Top-10} respectively according to the $\{c_{1},c_{2},\cdot\cdot\cdot,c_{r}\}$. The results are shown in Table \ref{tab:3}.

\section{Conclusion}
Throughout history, scientists in all eras have tried to predict the future. Newtonian mechanics Astrophysics is committed to studying the motion of objective things to predict the future state of the object. The continuous expansion of data on large-scale human consumption behavior and the development of theoretical models have enabled scientists to analyze further and understand human behavior and thus to realize the prediction of behavior. Predictability studies explore the limits of accuracy. The research on the predictability of recommender systems is to explore the limits of the regularity of human consumption behavior. Since there are a large number of scenarios in the recommender system that need to recommend multiple items at the same time, the accuracy of \textit{Top-N} is widely used as an indicator to evaluate the quality of the recommendation algorithm. However, the existing theory of predictability cannot deduce the predictability of \textit{Top-N} recommendation. 
We successfully constructed the predictability ratio of \textit{Top-N} behaviors with the highest probability, thus achieving the quantification of the \textit{Top-N} predictability and theoretically proving that our method is more accurate than the existing theories. The high precision limits of predictability dramatically enhances the practical significance of predictability research. 
The frequency of user purchases in the datasets investigated in this paper obeys Zipf's law, and $\xi$ is around $0.6$. However, the recommender system involves many scenarios, including shopping, music, news, video recommendation, etc. $\xi$ may not be within the range of our statistics, but our method is still applicable. However, since the scope involved is too large, we did not count the scaling deviation in these cases. This requires the user to calculate the specific deviation according to our method and then complete the calculation of the \textit{Top-N} predictability.

\section*{Acknowledgments}
This work was supported in part by the National Natural Science Foundation of China (No. 61960206008, No. 62002294) and the National Science Fund for Distinguished Young Scholars (No. 61725205).

\bibliographystyle{unsrt}  
\bibliography{TOP-N}

\appendix
\section*{Appendix A The detailed results of predictability on real-world datasets}
In the main manuscript, we only show the best accuracy performance of the ten recommendation algorithms in the evaluation indexes from \textit{Top-1} to \textit{Top-10} of the five real-world datasets. Here we present the specific performance results of $10$ recommendation algorithms for each metric.

\begin{table*}
	\caption{Detailed performance results of $10$ recommendation algorithms in real-world datasets under the evaluation indexes \textit{Top-1} to \textit{Top-10}.}
	\label{tab:1}
	\centering
	\begin{tabular}{cl|p{0.9cm}<{\centering}p{0.9cm}<{\centering}p{0.9cm}<{\centering}p{0.9cm}<{\centering}p{0.9cm}<{\centering}p{0.9cm}<{\centering}p{0.9cm}<{\centering}p{0.9cm}<{\centering}p{0.9cm}<{\centering}p{0.9cm}<{\centering}}
		\toprule 
		&\textbf{Dataset}&\textbf{AR}&\textbf{SMF}&\textbf{FISM}&\textbf{BPR}&\textbf{MC}&\textbf{SR}&\textbf{FOSSIL}&\textbf{Gru4Rec}&\textbf{IKNN}&\textbf{FPMC} \\
		\hline
		\multirowcell{5}{\textbf{Top-1}}
		&DUNN &0.0350 &0.0347 &0.0860 &0.0799 &0.0000 &0.0090 &0.0953 &0.0226 &0.0058 &0.0249 \\
		&INSTA &0.0312 &0.0378 &0.0317 &0.0345 &0.0036 &0.0338 &0.0014 &0.0018 &0.0182 &0.0215 \\
		&RSC15 &0.1420 &0.0985 &0.1996 &0.1948 &0.2229 &0.2252 &0.1921 &0.0538 &0.1076 &0.2067 \\
		&TMALL &0.1164 &0.1276 &0.0877 &0.0181 &0.1667 &0.1285 &0.0071 &0.0897 &0.2248 &0.1110 \\
		
		\hline
		\multirowcell{5}{\textbf{Top-2}}
		&DUNN &0.0554 &0.0590 &0.1282 &0.1247 &0.0014 &0.0182 &0.1473 &0.0460 &0.0092 &0.0444 \\
		&INSTA &0.0556 &0.0645 &0.0561 &0.0589 &0.0151 &0.0580 &0.0044 &0.0058 &0.0323 &0.0391 \\
		&RSC15 &0.2193 &0.1547 &0.2981 &0.2992 &0.3268 &0.3257 &0.2605 &0.0906 &0.1275 &0.2777 \\
		&TMALL &0.1353 &0.1504 &0.0916 &0.0300 &0.1822 &0.1540 &0.0093 &0.1238 &0.2752 &0.1346 \\
		
		\hline
		\multirowcell{5}{\textbf{Top-3}}
		&DUNN &0.0727 &0.0796 &0.1605 &0.1556 &0.0034 &0.0278 &0.1846 &0.0680 &0.0119 &0.0571 \\
		&INSTA &0.0755 &0.0858 &0.0752 &0.0800 &0.0288 &0.0779 &0.0068 &0.0117 &0.0446 &0.0544 \\
		&RSC15 &0.2657 &0.2021 &0.3595 &0.3602 &0.3834 &0.3790 &0.2879 &0.1176 &0.1382 &0.3052 \\
		&TMALL &0.1450 &0.1632 &0.0944 &0.0374 &0.1872 &0.1683 &0.0120 &0.1416 &0.2943 &0.1489 \\
		
		\hline
		\multirowcell{5}{\textbf{Top-4}}
		&DUNN &0.0896 &0.0960 &0.1848 &0.1787 &0.0059 &0.0363 &0.2128 &0.0872 &0.0144 &0.0673 \\
		&INSTA &0.0931 &0.1039 &0.0932 &0.0976 &0.0413 &0.0950 &0.0081 &0.0187 &0.0556 &0.0678 \\
		&RSC15 &0.3043 &0.2349 &0.4062 &0.4067 &0.4255 &0.4166 &0.3035 &0.1389 &0.1473 &0.3255 \\
		&TMALL &0.1503 &0.1723 &0.0975 &0.0437 &0.1906 &0.1791 &0.0139 &0.1545 &0.3038 &0.1624 \\
		
		\hline
		\multirowcell{5}{\textbf{Top-5}}
		&DUNN &0.1012 &0.1117 &0.2058 &0.1993 &0.0087 &0.0440 &0.2361 &0.1035 &0.0167 &0.0767 \\
		&INSTA &0.1082 &0.1198 &0.1073 &0.1141 &0.0524 &0.1104 &0.0089 &0.0249 &0.0654 &0.0786 \\
		&RSC15 &0.3323 &0.2527 &0.4379 &0.4383 &0.4517 &0.4422 &0.3123 &0.1580 &0.1529 &0.3408 \\
		&TMALL &0.1571 &0.1803 &0.0995 &0.0488 &0.1933 &0.1880 &0.0165 &0.1658 &0.3099 &0.1706 \\
		
		\hline
		\multirowcell{5}{\textbf{Top-6}}
		&DUNN &0.1099 &0.1252 &0.2215 &0.2159 &0.0115 &0.0523 &0.2560 &0.1191 &0.0191 &0.0866 \\
		&INSTA &0.1215 &0.1326 &0.1202 &0.1281 &0.0634 &0.1237 &0.0097 &0.0322 &0.0754 &0.0895 \\
		&RSC15 &0.3567 &0.2688 &0.4647 &0.4661 &0.4759 &0.4631 &0.3177 &0.1765 &0.1583 &0.3499 \\
		&TMALL &0.1603 &0.1856 &0.1010 &0.0529 &0.1954 &0.1951 &0.0195 &0.1749 &0.3146 &0.1772 \\
		
		\hline
		\multirowcell{5}{\textbf{Top-7}}
		&DUNN &0.1197 &0.1388 &0.2352 &0.2315 &0.0142 &0.0592 &0.2734 &0.1324 &0.0216 &0.0940 \\
		&INSTA &0.1345 &0.1440 &0.1319 &0.1408 &0.0733 &0.1357 &0.0106 &0.0379 &0.0832 &0.0994 \\
		&RSC15 &0.3714 &0.2843 &0.4831 &0.4895 &0.4933 &0.4765 &0.3231 &0.1957 &0.1639 &0.3560 \\
		&TMALL &0.1627 &0.1908 &0.1022 &0.0565 &0.1971 &0.2018 &0.0221 &0.1824 &0.3184 &0.1823 \\
		
		\hline
		\multirowcell{5}{\textbf{Top-8}}
		&DUNN &0.1292 &0.1498 &0.2486 &0.2458 &0.0173 &0.0647 &0.2881 &0.1437 &0.0235 &0.1012 \\
		&INSTA &0.1451 &0.1554 &0.1425 &0.1519 &0.0825 &0.1461 &0.0119 &0.0434 &0.0915 &0.1088 \\
		&RSC15 &0.3853 &0.2960 &0.5004 &0.5074 &0.5071 &0.4863 &0.3268 &0.2109 &0.1692 &0.3611 \\
		&TMALL &0.1655 &0.1947 &0.1033 &0.0601 &0.1984 &0.2070 &0.0240 &0.1891 &0.3213 &0.1868 \\
		
		\hline
		\multirowcell{5}{\textbf{Top-9}}
		&DUNN &0.1377 &0.1599 &0.2598 &0.2594 &0.0208 &0.0704 &0.3017 &0.1546 &0.0253 &0.1075 \\
		&INSTA &0.1565 &0.1655 &0.1524 &0.1619 &0.0926 &0.1562 &0.0135 &0.0487 &0.0994 &0.1170 \\
		&RSC15 &0.3961 &0.3052 &0.5125 &0.5217 &0.5184 &0.4932 &0.3300 &0.2248 &0.1720 &0.3664 \\
		&TMALL &0.1672 &0.1988 &0.1042 &0.0636 &0.1997 &0.2120 &0.0253 &0.1953 &0.3240 &0.1905 \\
		
		\hline
		\multirowcell{5}{\textbf{Top-10}}
		&DUNN &0.1440 &0.1701 &0.2709 &0.2704 &0.0240 &0.0758 &0.3150 &0.1664 &0.0269 &0.1142 \\
		&INSTA &0.1668 &0.1752 &0.1626 &0.1714 &0.1016 &0.1652 &0.0158 &0.0533 &0.1069 &0.1250 \\
		&RSC15 &0.4067 &0.3146 &0.5263 &0.5347 &0.5283 &0.5005 &0.3326 &0.2343 &0.1755 &0.3700 \\
		&TMALL &0.1694 &0.2022 &0.1051 &0.0667 &0.2010 &0.2169 &0.0270 &0.2032 &0.3261 &0.1944 \\
		
		\bottomrule
	\end{tabular}
\end{table*}

\end{document}